
\documentclass[letterpaper, 10 pt, conference]{ieeeconf}  

\IEEEoverridecommandlockouts                              
\overrideIEEEmargins

\usepackage{amsmath} 
\usepackage{amssymb}  
\usepackage{moreverb,url}
\usepackage[colorlinks,bookmarksopen,bookmarksnumbered,citecolor=red,urlcolor=red]{hyperref}
\usepackage{caption}
\usepackage{subcaption}
\usepackage{url}
\usepackage{amssymb}
\usepackage{latexsym}
\usepackage{psfrag}
\usepackage{epsfig}
\usepackage{graphicx}
\usepackage{epstopdf}
\usepackage{color}
\usepackage{tikz}
\usepackage{upquote}
\usetikzlibrary{calc}

\newcommand\BibTeX{{\rmfamily B\kern-.05em \textsc{i\kern-.025em b}\kern-.08em
T\kern-.1667em\lower.7ex\hbox{E}\kern-.125emX}}

\newcommand{\rank}{{\rm rank\;}}
\newcommand{\R}{\mathbb{R}}
\newcommand{\dfb}{\stackrel{\Delta}{=}}

\newtheorem{theorem}{\textbf{Theorem}}[section]

\newtheorem{definition}[theorem]{\textbf{Definition}}

\title{\LARGE \bf
On the stability and applications of distance-based flexible formations
}


\author{Hector Garcia de Marina, Zhiyong Sun and Shaoshuai Mou
\thanks{Hector Garcia de Marina is with the Unmanned Aerial Systems Center, M{\ae}rsk Mc-Kinney M{\o}ller Institute, Southern University of Denmark, Denmark. (e-mail: {\tt\small hgm@mmmi.sdu.dk}). 
Zhiyong Sun was with  
Research School of Engineering, The Australian National University, Canberra. He is now with the Department of Automatic Control, Lund University, Sweden. (e-mail: {\tt\small sun.zhiyong.cn@gmail.com}) Shaoshuai Mou is with School of Aeronautics and Astronautics, Purdue University, USA. (e-mail: {\tt\small mous@purdue.edu}).
}
}%

\begin{document}

\maketitle
\thispagestyle{empty}
\pagestyle{empty}

\begin{abstract}
This paper investigates the stability of distance-based \textit{flexible} undirected formations in the plane. Without rigidity, there exists a set of connected shapes for given distance constraints, which is called the ambit. We show that a flexible formation can lose its flexibility, or equivalently may reduce the degrees of freedom of its ambit, if a small disturbance is introduced in the range sensor of the agents. The stability of the disturbed equilibrium can be characterized by analyzing the eigenvalues of the linearized augmented error system. Unlike  infinitesimally rigid formations, the disturbed desired equilibrium can be turned unstable regardless of how small the disturbance is. We finally present two examples of how to exploit these disturbances as design parameters. The first example shows how to combine rigid and flexible formations such that some of the agents can move freely in the desired and locally stable ambit. The second example shows how to achieve a specific shape with fewer edges than the necessary for the standard controller in rigid formations.
\end{abstract}

\section{Introduction}
The past decade has seen increasingly rapid advances in the field of formation control. Formation control algorithms form part of the collective intelligence in the deployment of multi-agent systems, which are relevant in the exploration and surveillance of areas among other tasks \cite{oh2015survey}. In particular, rigid formations based on either distance rigidity \cite{KrBrFr08} or bearing rigidity \cite{zhao2014bearing} have emerged as powerful tools for the realization and stabilization of geometrical shapes by a team of agents. With a lot progress achieved in controlling rigid formations, especially using the well-known gradient-descent method \cite{KrBrFr08}, little attentions has been paid to formations without rigidity (i.e. flexible formations), for which the gradient-
descent control is not applicable \cite{dimarogonas2008stability}. We note that flexible setups have relevance in mechanical designs such as leg mechanisms \cite{nansai2013dynamic}; an example is the Jansen's linkage shown in Figure \ref{fig: jansen}, where the rotation of some flexible links induces desired trajectories to the rest of the junctions. This application can be of utility in multi-agent systems, where by controlling a small set of the agents, we can induce non-trivial trajectories to the rest of the agents. 

We define \emph{the ambit} of a flexible formation as all the possible deformations that a flexible shape can adopt for given distance constraints. Unlike  distance-based rigid formations \cite{MouMorseBelSunAnd15}, research to date has not shown the impact of small disturbances in the agents' range sensors on the stability of the desired ambit of a flexible formation. Recognition of this has motivated us to investigate stability of flexible formations under such small disturbances. In particular, we will show that unlike rigid formations, the perturbed desired ambit can be turned unstable regardless of how small the disturbance is, which, therefore, may compromise a multi-agent system if for example the communication ranges between neighbors are critical. We will present a technique based on adding virtual edges to the flexible formation until it becomes virtually rigid. Then, an eigenvalue analysis of the resulting \emph{augmented} error system can characterize the stability of the perturbed flexible formation. Furthermore, we will show that in general, these disturbances will result in losing part or all of the flexibility of the flexible formation. We will exploit this effect in order to control specific shapes like it is done with rigid formations, but requiring fewer distances to be controlled.

The paper is organized as follows. We review the distance-based formations but focus on the flexible ones in Section~\ref{sec: distance}. We continue in Section \ref{sec: error} by suggesting an eigenvalue analysis of an augmented error system in order to characterize the perturbed equilibrium by small disturbances in the range sensors of the agents. Preliminary examples of how to exploit these disturbances in flexible formations are shown in Section \ref{sec: ex}. We finish the paper with some conclusions in Section \ref{sec: con}.

\begin{figure}
\centering
	\includegraphics[width=0.24\columnwidth]{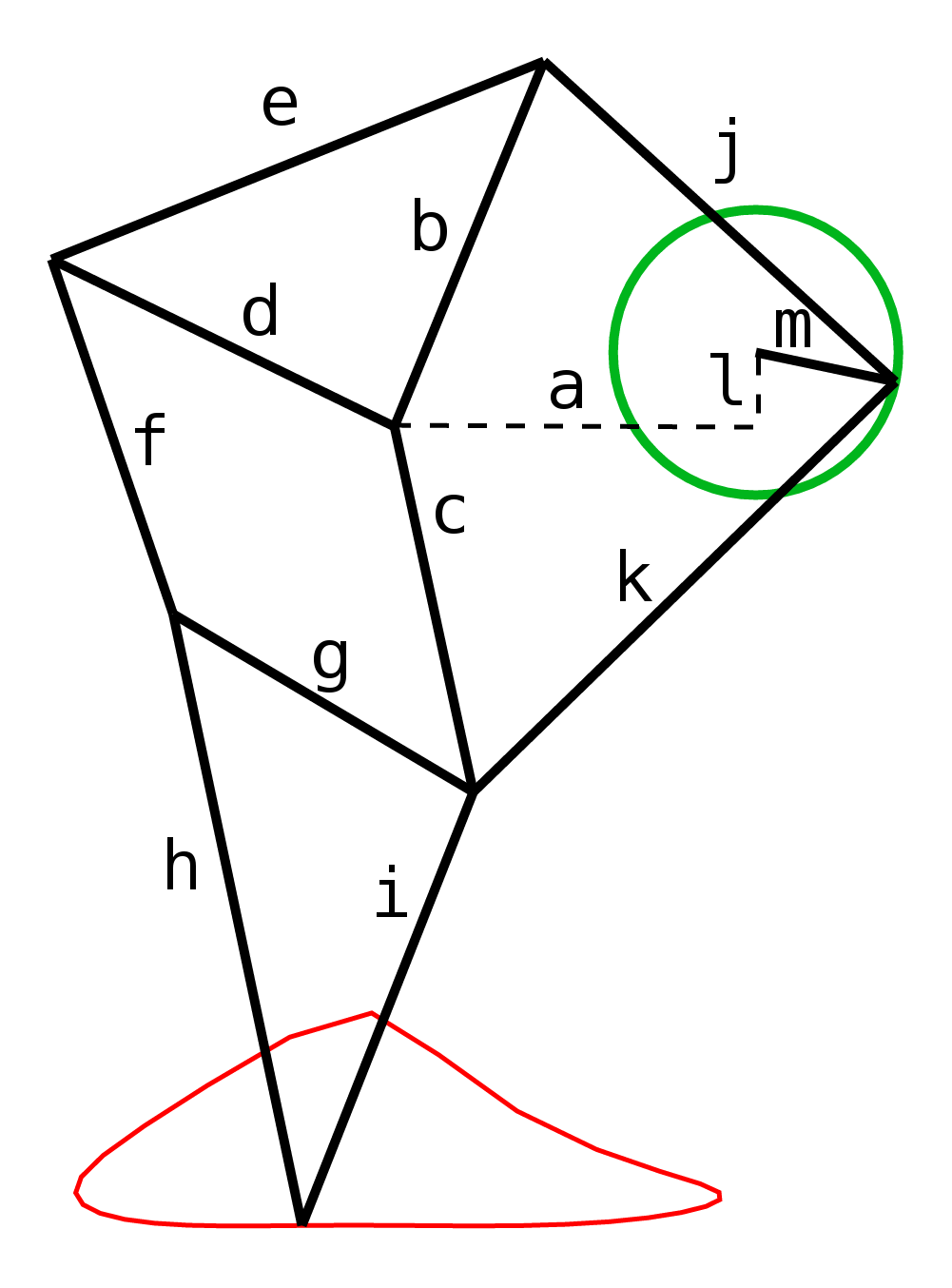}
	\includegraphics[width=0.24\columnwidth]{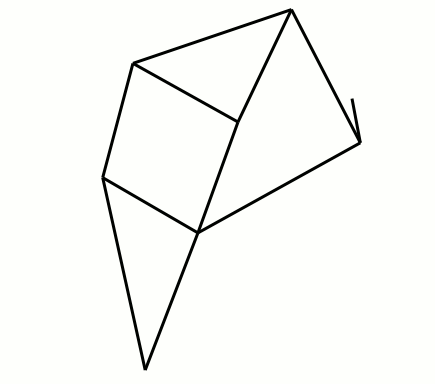}
	\includegraphics[width=0.24\columnwidth]{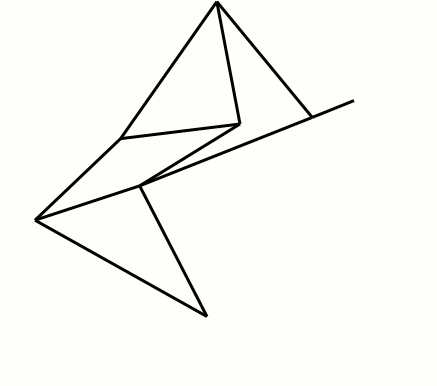}
	\includegraphics[width=0.24\columnwidth]{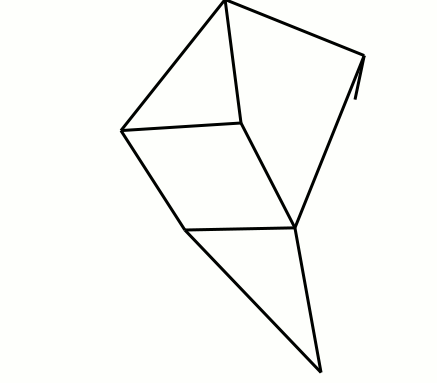}
	\caption{The Jansen's linkage\textsuperscript{1} is a planar leg mechanism to simulate a smooth walking motion. It is based on a flexible framework, where only some distances between nodes or joints are fixed. In particular, one of the sides of the link \emph{m} is fixed, and together with a rotational motion (in green color) actions the leg mechanism. This rotational motion of the edge \emph{m} induces the trajectory described by the node at the bottom in red color. Starting from the left as the initial state, the following three pictures illustrate the evolutions of the shape as the edge \emph{m} rotates clockwise. Pictures courtesy of Michael Ferry, Wikimedia Commons.}
\label{fig: jansen}
\end{figure}
\par\setcounter{footnote}{1}
\footnotetext{For an animation of the Jansen's linkage we refer to the Wikipedia entry \url{https://en.wikipedia.org/wiki/Jansen's\_linkage}}
    
\section{Distance-based formations review}
\label{sec: distance}
\subsection{Rigid, flexible formations and their ambit}
Consider a number of $n$ agents in the plane, whose labels are in the set $\mathcal{N}=\{1,2,...,n\}$. We denote by $p_i \in\mathbb{R}^2$ the position of agent $i\in\mathcal{N}$, and by $p\in\mathbb{R}^{2n}$ the stacked vector of positions $p_i, \forall i\in\mathcal{N}$. An agent $i$ can measure its relative position from other agents in the subset $\mathcal{N}_i \subseteq \mathcal{N}$, i.e., the neighbors of agent $i$. The neighbor relationships are assumed to be symmetric and can be described by an undirected graph $\mathbb{G} = (\mathcal{N}, \mathcal{E})$ with the ordered edge set $\mathcal{E}\subseteq\mathcal{V}\times\mathcal{V}$ for which an edge $(i,j)\in \mathcal{E}$ if and only if $i$ and $j$ are neighbors. The set $\mathcal{N}_i$ is defined by $\mathcal{N}_i\dfb\{j\in\mathcal{N}:(i,j)\in\mathcal{E}\}$. By assigning an arbitrary direction to each undirected edge, one could define the incidence matrix $B=[b_{ik}]_{n\times |\mathcal{E}|}$ for $\mathbb{G}$ by
\begin{equation}
	b_{ik} \dfb \begin{cases}+1 \quad \text{if} \quad i = {\mathcal{E}_k^{\text{tail}}} \\
		-1 \quad \text{if} \quad i = {\mathcal{E}_k^{\text{head}}} \\
		0 \quad \text{otherwise}
	\end{cases},
	\label{eq: B}
\end{equation}
where $\mathcal{E}_k^{\text{tail}}$ and $\mathcal{E}_k^{\text{head}}$ denote the tail and head nodes, respectively, of the edge $\mathcal{E}_k$, i.e., $\mathcal{E}_k = (\mathcal{E}_k^{\text{tail}},\mathcal{E}_k^{\text{head}})$.

The stacked vector of relative positions between neighboring agents is then given by
\begin{equation}
z = \overline B^T p,
	\label{eq: z}
\end{equation}
where $\overline B \dfb B \otimes I_2$ with $I_2$ being the $2\times 2$ identity matrix, and $\otimes$ the Kronecker product. Note that each $z_k \in \mathbb{R}^2$ in $z$ corresponds to the relative position between the neighboring agents $i$ and $j$ in the edge $\mathcal{E}_k = (i,j)$. We denote a \emph{formation} by the pair $\{\mathbb{G},p\}$. Consider the \emph{edge function} $\phi: \R^{2|\mathcal{E}|} \to \R^{|\mathcal{E}|}$ defined by $\phi(\overline B^Tp) \dfb D_z^Tz$, where we define the operator $D_x \dfb \operatorname{diag}\{x_1, \dots, x_l\}$ and $l$ is the number of stacked block elements in $x$. The distance-based formation control in the literature mainly focuses on \emph{infinitesimally rigid} formations, which can be characterized by introducing the \emph{rigidity matrix} $R \dfb D_z^T\overline B^T = \frac{1}{2}\frac{\partial \phi}{\partial p}$. More precisely, a formation is said to be infinitesimally rigid in the plane if $\rank(R) = 2n-3$. Roughly speaking, if one wants to keep the edge function constant by continuously moving the agents of an infinitesimally rigid formation, then the only allowed motion is the combination of translations and rotations of the whole team. In addition, the condition $\rank(R) = 2n-3$ implies a \emph{generic} deployment of the agents on the plane, e.g., $z\neq 0$ and not having all the relative positions $z_k$ in the same line. If the rank condition is not satisfied, then \emph{infinitesimally flexible} motions are allowed. Therefore the formation is not infinitesimally rigid even if it is still rigid \cite{izmestiev2009infinitesimal}.

An infinitesimally rigid formation is also \emph{minimal} if the removal of any one edge in $\mathbb{G}$ causes the formation to lose its rigidity. When a formation loses its rigidity, then it becomes a flexible formation.
\begin{definition}
A formation is \emph{flexible} if the agents can be moved continuously while the distances between neighboring agents remain unchanged, i.e., $\phi(\overline B^Tp)$ remains constant, and at least one inter-agent distance between two non-neighbors changes.
\label{def: flex}
\end{definition}

Consider the rotation matrix $T_k\in SO(2)$, then the set of all rotated vectors $T_kz_k$ such that $\phi$ is constant is defined as the transformation group $\mathcal{T}$. Note that by doing so, these rotations in the plane have their center of rotation at $p_j$ for each $z_k$ such that $\mathcal{E}_k = (i,j)$. Indeed, these rotations do not need to be all equal. For example, if we start from applying an arbitrary rotation to the edge $m$ in Figure \ref{fig: jansen}, then sequentially we can apply different rotations to the rest of the edges for completing one transformation in $\mathcal{T}$. However, while we could have some freedom in choosing the rotation for the next edge, we might be constrained by the previous chosen rotations in order to keep $\phi$ constant. Note that the only transformation $\mathcal{T}_{inf}\in\mathcal{T}$ allowed to an infinitesimally rigid formation is the trivial $Tz_k, \forall k$ for some fixed rotation matrix $T\in SO(2)$, i.e., a rotation of the whole team of agents at once. The term \emph{ambit} will be used in this paper to refer to all the possible \emph{deformations} that a formation can have.
\begin{definition}
	The \emph{ambit} of the formation, denoted by $\mathcal{A}z$, is the set $\{\gamma(z) : \gamma\in\mathcal{T} \}$ where $z$ is as in (\ref{eq: z}), and $\mathcal{T}$ is the group of rotational transformations to the elements of $z$ such that $\phi(\overline B^Tp)$ is constant. In addition, for any composition $\gamma = \gamma_1 \circ \gamma_2$, we exclude the whole rotation of the formation, i.e., $\gamma_{\{1,2\}}\notin \mathcal{T}_{\inf}$, where $\mathcal{T}_{\inf} \dfb Tz_k, k\in\{1,\dots,|\mathcal{E}|\}$ with $T\in SO(2)$.
\label{def: ambit}
\end{definition}

For example, all the possible deformations of the mechanism in Figure \ref{fig: jansen} are in the ambit of the corresponding flexible formation, and note that a whole rotation of one of the \emph{deformed} shapes will not actually \emph{change} such a shape. Therefore the ambit of an infinitesimally rigid formation is empty, since its shape must be constant. The complementary concept of \emph{orbit} is defined in \cite{MouMorseBelSunAnd15}, where the authors focus on infinitesimally rigid formations. Roughly speaking, an orbit refers to the shape that the team of agents form up to translations and rotations in the plane. Note that the ambit of the formation focuses on the relative positions in $z$ and not on the absolute ones in $p$, e.g., how all the possible flexible shapes look like and not where they are in the plane. Furthermore, all the shapes in the ambit of a formation can be translated and rotated on the plane, therefore describing an orbit as well. Since the whole rotation of the shape is excluded of the ambit by definition, then both ambit and orbit do not overlap.

In one of the applications in this paper, we will show that an algorithm will make the agents to converge to a desired ambit where the agents will be moving, while the ambit converges to a specific point in its orbit. For example, the agents would converge to the motion depicted in Figure \ref{fig: jansen}, but this motion will not be translated or rotated in the plane.

\subsection{Gradient descent control}
\label{sec: grad}
We model the dynamics of the agents as simple kinematic points
\begin{equation}
\dot p = u,
	\label{eq: u}
\end{equation}
where $u\in\R^{2n}$ is the stacked vector of control actions $u_i\in\R^2, \forall i$. For an application of the results derived in this paper to second order dynamics, we refer to the techniques shown in \cite{de2018taming}.

We denote by $d_k > 0$ the target distance to be controlled by the two neighboring agents $i$ and $j$ in the edge $\mathcal{E}_k$, and we assume that these distances are feasible, so it is possible to construct the vector of relative positions $z$. Consequently, we define the following error distance
\begin{equation}
	e_k(z) = ||z_k||^2 - d_k^2,
\end{equation}
and by taking the gradient descent of the quadratic potential function 
\begin{equation}
	V = \sum_{k=1}^{|\mathcal{E}|} e_k^2,
	\label{eq: Ve}
\end{equation}
we arrive at the control law proposed in \cite{KrBrFr08}, that together with (\ref{eq: u}) yields the compact form
\begin{equation}
	\dot p = -R(z)^Te,
	\label{eq: pdot}
\end{equation}
where $R(z)$ is the rigidity matrix, and $e\in\R^{|\mathcal{E}|}$ is the stacked vector of $e_k$'s. For the sake of completeness, the dynamics of agent $i$ extracted from (\ref{eq: pdot}) are given by
\begin{equation}
	\dot p_i = -\sum_{j\in\mathcal{N}_i}(p_i - p_j)(||p_i - p_j||^2 - d_{ij}^2),
	\label{eq: dotpigrad}
\end{equation}
where $d_{ij}=d_{ji}$ corresponds to $d_k$ for the edge $\mathcal{E}_k = (i,j)$.

The following systems regarding the dynamics of the relative positions and the distance errors will be used throughout the paper
\begin{align}
	\dot z &= -\overline B^TR(z)^Te \label{eq: zdot} \\
	\dot e &= -2D_z^T\overline B^TR(z)^Te = -2R(z)R(z)^Te, \label{eq: edot}
\end{align}
where we have applied (\ref{eq: z}) and $\dot e_k = 2z^T_k\dot z_k$ for deriving (\ref{eq: zdot}) and (\ref{eq: edot}) respectively.
The local stabilization of a minimally infinitesimally rigid shape can be shown by choosing (\ref{eq: Ve}) as a Lyapunov function, whose time derivative is
\begin{equation}
	\dot V = -4e^TR(z)R(z)^Te = -4e^TQ(e)e, \label{eq: dotVe}
\end{equation}
where all the terms in $R(z)R(z)^T = Q(e)$ can be expressed as functions of $e$ if the formation is infinitesimally rigid \cite{MouMorseBelSunAnd15}, and the matrix $Q(0)$ is positive definite since $R^T(z)$ is full row rank for minimally infinitesimally rigid formations. Consequently, the error system (\ref{eq: edot}) is self-contained and the local (exponential) stability to the desired shape and to a point in its orbit on the plane follows.

\section{Stability analysis of flexible formations}
\label{sec: error}

The goal of this section is to check the robustness of the desired equilibrium of the flexible formation under small disturbances in the range sensors of the agents, e.g., a small bias. Interestingly, we will show that there is an important difference between infinitesimally rigid and flexible formations regarding their stability.

For a desired infinitesimally rigid formation, different shapes can be constructed from the set
\begin{equation}
	\mathcal{D} \dfb \big\{z : ||z_k|| = d_k, k\in\{1,\dots,|\mathcal{E}|\}\big \},
\label{eq: D}
\end{equation}
likewise different ambits can be constructed from the same set if the formation is flexible. Therefore, the action of the controller for a flexible formation must stabilize the team of agents in a desired ambit, possibly making the team to converge to a generic point in the ambit like in \cite{dimarogonas2008stability}, or in the  stronger version, to a desired point in the ambit likewise the standard approach controls rigid shapes \cite{KrBrFr08}.

Note that according to the definition of rigidity, the union of a rigid and a flexible formation will lead to another flexible formation. We consider that the graph $\mathbb{G}_{flex}$ in a flexible formation does not contain any rigid subgraph, and that the graph $\mathbb{G}_{inf}$ refers to a minimally and infinitesimally rigid formation. For the sake of simplicity, we will first analyze the equilibrium points of a formation where $\mathbb{G} = \mathbb{G}_{flex}$, and together with the results on the stability of rigid formations \cite{MouMorseBelSunAnd15} we can reach some conclusions for the composition of rigid and flexible formations.

\subsection{The augmented error system}
\label{sec: aug}
In this section we will build a self-contained error system for flexible formations. Recall that the number of edges in $\mathbb{G}_{flex}$ is fewer than in $\mathbb{G}_{inf}$, since $\mathbb{G}_{flex}$ does not contain any subgraph that can be part of an infinitesimally rigid formation. Let us \emph{virtually} add the minimum number of edges to the graph $\mathbb{G}_{flex}$ until it becomes $\mathbb{G}_{inf}$. That is, roughly speaking, the formation will virtually become a minimally infinitesimally rigid one and consequently we will be able to construct a self-contained error system \cite{MouMorseBelSunAnd15}. We employ the word \emph{virtual} since the new edges do not have any impact in the actual controller in (\ref{eq: pdot}), but to make the analysis of the error system (\ref{eq: edot}) easier. We denote this new graph by $\mathbb{\tilde G}_{inf}$, and define by $\tilde B$ the incidence matrix only for these virtual edges. Thus, the new relative positions for these virtual neighbors are given by $\tilde z = \overline{\tilde B}^Tp$. Without loss of generality one picks a point in the ambit of the desired flexible formation. In particular, an arbitrary shape can be infinitesimally rigid because of the virtual $\tilde z$. For example, all agents must not be collinear or coincident in the same position. Then, assign the corresponding distances $\tilde d_k$ to the new created edges for constructing their virtual error signals. Note that with these virtual operations we leave untouched the controller in (\ref{eq: pdot}), i.e., these new distances $\tilde d_k$ are not controlled at all. In particular, we have a new virtual error vector $\tilde e\in\R^{|\mathcal{E}_{inf} | - |\mathcal{E}_{flex}|}$, and note that in the dynamics
\begin{equation}
	\dot{\tilde z} = \overline{\tilde B}^T \dot p = -\overline{\tilde B}^T\overline BD_z e,
\end{equation}
the signal $\tilde e$ is not involved. Also note that since the origin of the error $e$ is locally stable as we discussed in Section \ref{sec: grad}, then the $\tilde z$ dynamics are also stable. We now derive the dynamics for the new virtual error signal
\begin{equation}
	\dot{\tilde e} = -2D^T_{\tilde z}\overline{\tilde B}^T\overline BD_z e.
	\label{eq: etildedot}
\end{equation}
Since $\mathbb{\tilde G}_{inf}$ comes from a minimally infinitesimally rigid formation, the dynamics of the stacked vector $\begin{bmatrix}e^T & \tilde e^T\end{bmatrix}^T$, or the \emph{augmented error}, is now self-contained. In particular, all the dot products involving the elements of $z$ and $\tilde z$ can be written as functions of $\begin{bmatrix}e^T & \tilde e^T\end{bmatrix}^T$. Now we are ready for the following result addressing the robustness of the desired flexible formation.

\begin{theorem}
	\label{thm: 1}
	Consider a flexible formation $\{\mathbb{G}_{flex},p\}$ with a set of feasible desired distances and a controller as in (\ref{eq: pdot}). Then, for all virtual $\{\mathbb{\tilde G}_{inf},p\}$ constructed from $\{\mathbb{G}_{flex},p\}$, the linearization at the origin of the self-contained augmented error $\begin{bmatrix}e^T & \tilde e^T\end{bmatrix}^T$ dynamics contains at least one zero eigenvalue.
\end{theorem}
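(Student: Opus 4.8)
The plan is to show that the virtual error $\tilde e$ enters the augmented dynamics only as a coefficient multiplying the actual error $e$, so the columns of the linearization associated with $\tilde e$ vanish at the origin. First I would assemble the two subsystems (\ref{eq: edot}) and (\ref{eq: etildedot}) into a single vector field. Writing $\tilde R \dfb D_{\tilde z}^T\overline{\tilde B}^T$ for the virtual rigidity matrix and using $\overline B D_z = R(z)^T$, they combine into
\begin{equation}
	\matt{\dot e \\ \dot{\tilde e}} = -2\matt{R(z) \\ \tilde R}R(z)^T e,
	\label{eq: augplan}
\end{equation}
which I abbreviate as $\dot E = F(E)$ with $E \dfb \matt{e^T & \tilde e^T}^T$. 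Because $\mathbb{\tilde G}_{inf}$ is minimally infinitesimally rigid, every dot product appearing in the entries of $R(z)R(z)^T$ and $\tilde R R(z)^T$ can be written as a function of $E$; hence $F$ is a bona fide vector field near the origin of the augmented error space, of the structured form $F(E) = -2\,M(E)\,e$, where $M(E) \dfb \matt{R(z)R(z)^T \\ \tilde R R(z)^T}$ and $e$ is the first block of $E$.

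The crucial point is that $F$ depends on $\tilde e$ only through $M(E)$, which always appears multiplied by $e$. I would then differentiate $F(E) = -2M(E)e$ and evaluate at $E = 0$, where $e = 0$. Any term in which the derivative falls on $M$ carries a surviving factor $e = 0$ and drops out, so only the derivative acting on the explicit factor $e$ remains. This yields the block structure
\begin{equation}
	DF(0) = \matt{-2R(z)R(z)^T & 0 \\ -2\tilde R R(z)^T & 0},
	\label{eq: Jacplan}
\end{equation}
evaluated at the desired configuration, i.e.\ the columns associated with $\tilde e$ are identically zero.

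Finally I would read off the spectrum from this lower block-triangular form: the eigenvalues of $DF(0)$ are those of the $(1,1)$ block $-2R(z)R(z)^T$ together with exactly $|\mathcal{E}_{inf}| - |\mathcal{E}_{flex}|$ zeros produced by the zero block. Since $\{\mathbb{G}_{flex},p\}$ is flexible it contains no infinitesimally rigid subgraph and therefore has strictly fewer edges than $\mathbb{\tilde G}_{inf}$, so at least one virtual edge must be added and $\dim\tilde e = |\mathcal{E}_{inf}| - |\mathcal{E}_{flex}| \geq 1$. Hence $DF(0)$ has at least one zero eigenvalue, for every admissible choice of virtual edges, which is exactly the assertion.

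The same conclusion admits a geometric reading that I would keep as a sanity check: along any nontrivial infinitesimal flex $v$ one has $R(z)v = 0$ while $\tilde R v \neq 0$ (otherwise $v$ would lie in the kernel of the rigid $\mathbb{\tilde G}_{inf}$ and be trivial), so the ambit maps into the equilibrium set $\{e = 0\}$ along a curve with nonzero tangent $\matt{0 \\ 2\tilde R v}$ through the origin, which must lie in $\ker DF(0)$. I expect the only delicate step to be justifying that $M(E)$ is well defined and differentiable as a function of $E$ alone — that is, invoking the self-containedness guaranteed by the infinitesimal rigidity of $\mathbb{\tilde G}_{inf}$. Once that is in place, the vanishing of the $\tilde e$-columns in (\ref{eq: Jacplan}) and the edge-count bound $\dim\tilde e \geq 1$ make the result immediate.
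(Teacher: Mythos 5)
Your proposal is correct and takes essentially the same route as the paper's own proof: you factor the augmented dynamics as a coefficient matrix times $e$ (your $M(E)$ is just the paper's $Q(e,\tilde e)$ and $F(e,\tilde e)$ blocks stacked), invoke the self-containedness granted by minimal infinitesimal rigidity of $\mathbb{\tilde G}_{inf}$ so that differentiating at the origin kills every term where the derivative falls on the coefficient (it carries the surviving factor $e=0$), and read off at least $\dim\tilde e = |\mathcal{E}_{inf}|-|\mathcal{E}_{flex}|\geq 1$ zero eigenvalues from the vanishing $\tilde e$-columns of the resulting block lower-triangular Jacobian, exactly as in the paper's equation (\ref{eq: J}). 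Your closing geometric remark exhibiting a kernel vector of $DF(0)$ from a nontrivial infinitesimal flex is a sound extra sanity check not present in the paper, but the core argument is the same.
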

\begin{proof}
We first write the dynamics of $\begin{bmatrix}e^T & \tilde e^T\end{bmatrix}^T$ together
\begin{equation}
\begin{cases}
	\dot e &= -2R(z)R(z)^Te = -Q(e, \tilde e) e =: \epsilon_1(e,\tilde e)\\
	\dot{\tilde e} &= -2D^T_{\tilde z}\overline{\tilde B}^TR(z)^T e = -F(e, \tilde e)e =: \epsilon_2(e,\tilde e)
\end{cases},
\end{equation}
where the functions $Q$ and $F$ only depend on $e$ and $\tilde e$ since $\{\mathbb{\tilde G}_{inf},p\}$ is infinitesimally and minimally rigid. In particular, these functions actually depend on the chosen $d_k$ and $\tilde d_k$, i.e., we are focusing on a generic point of the desired ambit but we can trivially generalize the following calculations to all the points in the desired ambit. Let us calculate the following partial derivatives
	\begin{equation}
\begin{cases}
	\frac{\partial \epsilon_1}{\partial e} &= -Q(e, \tilde e) - \frac{\partial Q(e, \tilde e)}{\partial e}e\\
	\frac{\partial \epsilon_1}{\partial \tilde e} &= -\frac{\partial Q(e,\tilde e)}{\partial \tilde e}e \\
	\frac{\partial \epsilon_2}{\partial e} &= -F(e, \tilde e) - \frac{\partial F(e, \tilde e)}{\partial e}e\\
	\frac{\partial \epsilon_2}{\partial \tilde e} &= -\frac{\partial F(e, \tilde e)}{\partial \tilde e}e
\end{cases}.
	\end{equation}
We evaluate these partial derivatives at $\begin{bmatrix}e & \tilde e\end{bmatrix} = 0$, where $\frac{\partial Q(e,\tilde e)}{\partial \tilde e}e \big |_{e,\tilde e = 0} = \frac{\partial Q(e,\tilde e)}{\partial e}e \big |_{e,\tilde e = 0} = \frac{\partial F(e,\tilde e)}{\partial e}e \big |_{e,\tilde e = 0} = \frac{\partial F(e,\tilde e)}{\partial \tilde e}e \big |_{e,\tilde e = 0} = 0$. Note that the terms in $Q$ and $F$ are scalar products between the elements of $z$ and $\tilde z$, and their partial derivatives with respect to $e$ and $\tilde e$ do not cancel out the second term $e$ in, for example, $\frac{\partial Q(e,\tilde e)}{\partial \tilde e}e$ \cite{MarCaoJa15,MouMorseBelSunAnd15}.
Therefore, we arrive at the following linearized system at the equilibrium $\begin{bmatrix}e^T & \tilde e^T\end{bmatrix}^T = 0$
\begin{equation}
	\begin{bmatrix}\dot e \\ \dot{\tilde e}\end{bmatrix} = -\begin{bmatrix}Q(0,0) & 0 \\ F(0,0) & 0 \end{bmatrix}\begin{bmatrix} e \\ \tilde e\end{bmatrix}, \label{eq: J}
\end{equation}
where it is clear that the number of eigenvalues of the Jacobian in (\ref{eq: J}) equal to zero is at least as the dimension of $\tilde e$. Since $\mathbb{\tilde G}_{inf}$ has at least one edge more than $\mathbb{G}_{flex}$, then at least one eigenvalue of the Jacobian in (\ref{eq: J}) is equal to zero. Finally, the selection of different $\tilde d_k$ at different points of the desired ambit of $\mathbb{G}_{flex}$ might change $Q$ and $F$, but it does not change the presence of at least one zero eigenvalue in (\ref{eq: J}).
\end{proof}

A quick inspection of (\ref{eq: J}) reveals the evident null impact of the
selection of the virtual desired distances $\tilde d_k$ on the stability of the controlled error signal $e$ since $Q$ is a positive definite matrix. The fact that the self-contained linearized system (\ref{eq: J}) contains at least one eigenvalue equal to zero compromises its stability against small disturbances. Consequently, the new shifted equilibrium in a neighborhood of the origin of the augmente error system for every point in the desired ambit might become unstable.

For example, consider the situation where agents $i$ and $j$ in the edge $\mathcal{E}_k$ have different understandings about the target distance between them\footnote{By rearranging terms, this situation can be seen as having biased distance sensors.}, denoted by $d_k^2$ and $d_k^2 + \mu_k$ respectively. Without lose of generality, consider that the mismatch is at ${\mathcal{E}_k^{\text{tail}}}$, then we can rewrite (\ref{eq: pdot}) as
\begin{equation}
\dot p = -R(z)^Te -\overline SD_z \mu,
	\label{eq: pdotmu}
\end{equation}
where $\mu\in\R^{|\mathcal{E}|}$ is the stacked vector of mismatches $\mu_k$, and $S\in\R^{n\times|\mathcal{E}|}$ is the result of taking the incidence matrix $B$ and replacing its $-1$'s entries by $0$. Together with the \emph{mismatched} error system, we can derive the dynamics of the \emph{mismatched} virtual error system by following the same steps as for arriving at (\ref{eq: etildedot})
\begin{equation}
	\begin{cases}
\dot e &= -2R(z)R(z)^Te -2R(z)\overline S D_z \mu =: \epsilon_3(e,\tilde e,\mu) \\
		\dot{\tilde e} &= -2D_{\tilde z}\overline{\tilde B}^TR(z)^Te -2D_{\tilde z}\overline{\tilde B}^T\overline S D_z \mu =: \epsilon_4(e,\tilde e,\mu),
	\end{cases}
	\label{eq: eaugmu}
\end{equation}
and by following the same steps as in the proof of Theorem \ref{thm: 1} we arrive at the linearization of the self-contained augmented error system at the perturbed equilibrium by $\mu$
\begin{equation}
	\begin{bmatrix}\dot e \\ \dot{\tilde e}\end{bmatrix} = \begin{bmatrix}\frac{\partial \epsilon_3(e, \tilde e, \mu)}{\partial e} & \frac{\partial \epsilon_3(e, \tilde e, \mu)}{\partial \tilde e} \\ \frac{\partial \epsilon_4(e, \tilde e, \mu)}{\partial e} & \frac{\partial \epsilon_4(e, \tilde e, \mu)}{\partial \tilde e}\end{bmatrix}\begin{bmatrix} e \\ \tilde e\end{bmatrix},
\label{eq: Jmu}
\end{equation}
where for a system free of mismatches or calibration errors, i.e., $\mu = 0$, the system (\ref{eq: Jmu}) equals (\ref{eq: J}). Firstly, note in (\ref{eq: eaugmu}) that only $e = 0$ does not imply $\dot {\tilde e} = 0$ anymore. In fact, the perturbed equilibrium at the origin of the self-contained augmented error system \emph{forces} some (possibly all) $\tilde d_k$ to a specific value. Consequently, the ambit of the perturbed flexible formation is reduced. Secondly, we need to carry out an analysis on the sensitivity of the zero eigenvalues of (\ref{eq: Jmu}) in order to check the stability of the perturbed equilibrium. While the perturbed equilibrium is a continuous function of $\mu$ (therefore close to the desired one for small disturbances), it might become unstable. On top of that, the perturbed formation might not converge to a point of its orbit, since the velocities of the agents in system (\ref{eq: pdotmu}) might not converge to zero. A very detailed example of the presented findings for the case of having three agents and two edges can be found in \cite{de2017controlling}.

\section{Applications}
\label{sec: ex}
\subsection{Satellites around rigid formations}
Consider $n$ agents in a minimally infinitesimally rigid formation with $\mathbb{G}_{inf}$, and add $m$ agents such that each new agent is only linked with one of the agents belonging to the rigid formation. Therefore the resulting formation is the union of a minimally infinitesimally rigid and a flexible formation. We call \emph{satellite} and \emph{rigid} agents to the ones belonging to the flexible and rigid part of the formation respectively. The goal of the algorithm in this subsection is to form an infinitesimally rigid shape with the rigid agents, while the satellite agents orbit the rigid agents, where the radius and the angular velocity of the satellite agents can be set by design. Note that by setting the radius, we are setting the ambit of the corresponding $z_k$ in the flexible formation. Also recall that all the agents cooperate in their respective edges, i.e, the underlying graph for describing the formation is undirected and not directed. For example, this fact has implications in the robustness and the convergence speed of the system. In fact, since all the agents are interconnected, \emph{a priori} is not trivial to see what is the impact in the whole formation when some agents have the goal of orbiting and controlling a desired distance at the same time.

Without lose of generality, set all the satellite agents to be the tail in their corresponding edges $\mathcal{E}_k$ with the rigid agents, and order the edges such that the ones for the satellite agents are the last ones. Let us define the matrix $S\in\R^{n\times|\mathcal{E}|}$ as the result of setting all the elements of the incidence matrix $B$ to zero excepting the terms corresponding to the tails of the edges where a satellite agent is involved. Let the $z_k^\bot$ be the $\frac{\pi}{2}$ radians clockwise rotated version of $z_k$, and stack all of them in the vector $z^\bot\in\R^{2|\mathcal{E}|}$. Finally, we propose the following extension to the standard gradient based controller in (\ref{eq: pdot})
\begin{equation}
\dot p = - R(z)^Te + \overline S z^\bot.
\label{eq: pzper}
\end{equation}
In particular, the satellite agent $i$ in the edge $\mathcal{E}_k$ follows
\begin{equation}
\dot p_i = -z_ke_k + z_k^\bot,
	\label{eq: porb}
\end{equation}
while the rigid agents just implement the standard gradient control (\ref{eq: dotpigrad}). Note that while so far we have assumed that $\mu_k = a\in\R$, we are not limited to only real numbers, so we have $\mu_k = A\in\R^{2\times 2}$ in (\ref{eq: porb}), as it has also been considered in \cite{de2017distributed}, e.g., we could consider a failure in the sensor where the two components of $z_k$ are mixed.

\begin{theorem}
Consider the system (\ref{eq: pzper}) where the formation is the result from the union of a minimally infinitesimally rigid formation with $n$ agents, and a flexible formation consisting only of $m$ satellite agents. Then, the desired ambit $\mathcal{A}z$ of the flexible formation with the set (\ref{eq: D}), which defines a desired infinitesimally rigid shape with attached flexible edges, is locally stable. Furthermore, the ambit is locally stable on a point of its orbit, i.e., the rigid agents will stop in the plane while the satellite agents will orbit around their corresponding rigid agents.
\end{theorem}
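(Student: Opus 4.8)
The plan is to reduce the closed loop to its distance-error dynamics and to show that the tangential term $\overline S z^\bot$ is invisible to them. Differentiating the errors along (\ref{eq: pzper}) gives $\dot e = 2D_z^T\dot z = 2R\dot p = -2R(z)R(z)^Te + 2R(z)\overline S z^\bot$, exactly as in (\ref{eq: edot}) but with the extra orbiting term. The first step is to check that this extra term is zero: the only nonzero blocks of $\overline B^T\overline S z^\bot$ sit on the satellite edges and equal $z_k^\bot$, so $D_z^T$ contracts them into the scalars $z_k^Tz_k^\bot=0$. Hence $\dot e = -2R(z)R(z)^Te$, i.e. the orbiting control changes neither the edge lengths nor their error dynamics, and all rigid/satellite coupling is already contained in the full-graph rigidity matrix $R$.

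The second step is to argue $R(z)R(z)^T\succ 0$ near the desired ambit. Although the union formation is flexible (so $R$ is column-rank deficient), $R$ has \emph{full row rank}: each satellite is incident to a single edge, so its row has nonzero blocks only in the satellite's own coordinates and is independent of every other row, while the remaining $2n-3$ rows are those of the minimally infinitesimally rigid subformation and are independent by minimality. Thus $R$ has $|\mathcal E|=2n-3+m$ independent rows and $Q(e):=RR^T$ is positive definite at every generic point of the ambit. Taking $V=e^Te$ as in (\ref{eq: Ve}) gives $\dot V=-4e^TRR^Te\le -4\lambda_0\|e\|^2$ with $\lambda_0=\inf\lambda_{\min}(RR^T)>0$ over a compact generic tube around the ambit, so $e\to 0$ exponentially while the configuration stays in the tube. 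This is precisely local stability of the desired ambit defined by (\ref{eq: D}): the satellite radii converge to $d_k$ and the rigid shape is recovered, with the coupling handled automatically by the single Lyapunov function over all edges.

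Finally I would characterise the limiting motion. Evaluating (\ref{eq: pzper}) at $e=0$ gives $\dot p=\overline S z^\bot$: rigid agents receive no tangential term, so $\dot p_j=0$, while each satellite obeys $\dot z_k=z_k^\bot$, a constant-speed circular orbit of radius $d_k$ about its rigid neighbour (indeed $\tfrac{d}{dt}\|z_k\|^2=2z_k^Tz_k^\bot=0$). To upgrade this to convergence to a point of the orbit, note each rigid velocity is a sum of terms $\pm z_ke_k$ over its incident edges; since $z_k$ stays bounded and $\|e(t)\|\le Ce^{-\lambda_0 t}$, we get $\|\dot p_j\|\le C'e^{-\lambda_0 t}$, which is integrable, so each $p_j$ converges to a fixed limit. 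The rigid shape therefore stops at a specific point of its orbit while the satellites circulate, proving both assertions.

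The main obstacle I anticipate is making the second step rigorous for a time-varying, non-equilibrium trajectory: because the satellites keep orbiting, the closed loop does not approach an equilibrium in $p$, so stability must be framed for the ambit and the error $e$ rather than for $p$, and one must guarantee that $RR^T$ stays \emph{uniformly} positive definite along the orbiting motion (the configuration must remain generic with $\|z_k\|$ bounded away from $0$) so that the exponential bound on $V$ is legitimate and the trajectory cannot leave the tube. The drift-free convergence of the rigid agents then follows from integrability, but it hinges on this uniform exponential decay.
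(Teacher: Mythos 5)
Your proposal is correct and follows essentially the same route as the paper's proof: differentiate the errors along (\ref{eq: pzper}), kill the orbiting term via $z_k^Tz_k^\bot=0$ so that $\dot e=-2R(z)R(z)^Te$, use full row rank of $R$ near the desired ambit for exponential decay of $e$, and then read off the limiting motion ($\dot p_j\to 0$ for rigid agents, $\dot z_k\to z_k^\bot$ for satellites). If anything, you are slightly more careful than the paper at two points it leaves implicit --- the explicit degree-one argument for why the satellite rows preserve full row rank, and the integrability of the exponentially decaying rigid-agent velocities to conclude the positions actually converge to a point of the orbit rather than merely having vanishing velocity.
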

\begin{proof}
We start by deriving the dynamics of the stacked vector of relative positions $z$
\begin{equation}
\dot z = -\overline B^T R(z)^Te + \overline B^T\overline S z^\bot.
	\label{eq: zp}
\end{equation}
Note that because of how we have defined $S$ and $B$ yields
\begin{equation}
B^TS = \begin{bmatrix}0 & 0 \\ 0 &  I_m\end{bmatrix},
\end{equation}
where $I_m$ is the $m\times m$ identity matrix. We further derive the dynamics of the error system
\begin{align}
	\dot e &= -2D_z^T\overline B^T R(z)^Te + 2D_z^T\overline B^T\overline S z^\bot
	\nonumber \\
	&= -2D_z^T\overline B^T R(z)^Te + 2D_z^T \begin{bmatrix}0 & 0 \\ 0 &  \overline I_m\end{bmatrix} z^\bot \nonumber \\
		&= -2D_z^T\overline B^T R(z)^Te + 2\begin{bmatrix}0 \\ \vdots \\ z^T_{|\mathcal{E}|-m+1}z^\bot_{|\mathcal{E}|-m+1} \\ \vdots \\ z^T_{|\mathcal{E}|}z^\bot_{|\mathcal{E}|} \end{bmatrix} \nonumber \\
	&= -2R(z)R(z)^Te,
\end{align}
	where we have used the relation $z_k^Tz_k^\bot = 0$. The vanishing of the second term of the error dynamics was not totally unexpected. In fact, the second term in the dynamics of the satellite agent in (\ref{eq: porb}) is not contributing to get closer or further to the corresponding rigid agent, therefore it does not have any impact in the control of the corresponding $d_k$. For the considered formation, it is not difficult to check that $R(z)^T$ is full row rank in the neighborhood of the desired ambit. Therefore, by choosing (\ref{eq: Ve}) as Lyapunov function and invoking LaSalle's principle, we can conclude that the error signal $e(t) \to 0$ exponentially fast as $t\to\infty$ if $||e(0)||$ is sufficiently small \cite{MaJaCa15}.
 
We continue by analyzing the dynamics of $z$ in (\ref{eq: zp}), where it is clear that
	\begin{equation}
	\begin{cases}
		\dot z_k(t) \to 0, k\in\{1,\dots,|\mathcal{E}|-m\} \\
	\dot z_k(t) \to Hz_k(t), k\in\{|\mathcal{E}|-m+1, \dots, |\mathcal{E}|\} \end{cases} \ \text{as} \ t\to\infty, \nonumber
	\end{equation}
	where $H = \begin{bmatrix}0 & -1 \\ 1 & 0\end{bmatrix}$ 
    shows the steady state rotatory motion of the last $m$ relative positions. Therefore, we can conclude that $z(t) \to \mathcal{A}z$ as $t\to\infty$, i.e., the desired ambit of the formation is locally stable.

We finally check the system $(\ref{eq: pzper})$, concluding that the first $n$ agents' velocities belonging to the minimally infinitesimally rigid formation converge to zero as the error signal converges to zero, and the last $m$ agents travel in an orbit around their corresponding rigid agents. Therefore, we can also conclude that the desired ambit $\mathcal{A}z$ is also locally stable on a point of its orbit in the plane.
\end{proof}

Obviously, the angular velocity of the satellite agents can be set by multiplying $z_k^\bot$ in (\ref{eq: porb}) by a factor $\omega_k\in\R$. We show a simulation in Figure \ref{fig: orbits} of a formation with four rigid agents and four satellite agents. 

We have simulated more complex patterns like the one in Figure \ref{fig: jansen} by pinning down some of the agents. However, while we can achieve stable motions, the errors are not driven to zero since some of the agents do not compensate the steady state motion of their neighbors. The simulations indicate that the error signals follow a periodic signal that is induced from the forced constant rotational motions. This insight indicates that estimators for the compensation of harmonic disturbances based on the internal model principle can be helpful as it has been demonstrated in \cite{MarCaoJa15}.

\begin{figure}
\centering 
\includegraphics[width=0.5\columnwidth]{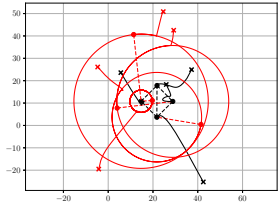} 
\caption{A flexible formation consisting of four rigid agents (black color) forming a square, with four satellite agents (red color). The crosses are the initial positions, and the dashed lines are the controlled distances. The formed square together with the circular trajectories defines the desired ambit $\mathcal{A}z$ of the flexible formation.}
\label{fig: orbits}
\end{figure}

\subsection{Converging to a specific shape in the ambit of a flexible formation}
As discussed in Section \ref{sec: aug}, the deliberated introduction of biases into the range sensors for the gradient based controller can reduce the ambit of a flexible formation. In particular, we can design such parameters in order to reduce the ambit of the new equilibrium to the empty set. Indeed, we can stabilize a shape to a point of its orbit (or to a desired orbit if we want a travelling shape), but by employing fewer edges than in the standard rigidity-based controllers, and of course, without requiring extra resources. For example, the system is still distributed and the agents employ their own  coordinate frames in performing local measurements. Finally, the stability has to be assessed by checking the eigenvalues of the Jacobian in (\ref{eq: Jmu}). Although we are investigating systematic methods in order to avoid the eigenvalue calculation, we provide the following example
\begin{equation}
	\begin{cases}
		\dot p_1 &= -z_1e_1 + z_4e_4 + z_1 - z_4^\bot\\
		\dot p_2 &= -z_2e_2 + z_1e_1 \\
		\dot p_3 &= -z_3e_3 + z_2e_2 + z_3 - z_2^\bot\\
		\dot p_4 &= -z_4e_4 + z_3e_3
	\end{cases},
	\label{eq: psq}
\end{equation}
where $B = \left (\begin{smallmatrix}
1 & 0 & 0 & -1 \\
-1 & 1 & 0 & 0 \\
0 & -1 & 1 & 0 \\
0 & 0 & -1 & 1
\end{smallmatrix}\right)$,
with $d_{\{1,2,3,4\}} = d > 0$. It is straightforward to check that $e = 0$ with $z_1 = z_4^\bot$ and $z_3 = z_2^\bot$ is a set of desired equilibra of (\ref{eq: psq}), i.e., we are restricting the flexible parallelogram to a square. More precisely, the ambit of the \emph{disturbed} flexible shape is $\mathcal{A}z = \emptyset$. The stability of the shape is checked in (\ref{eq: Jmu}) by choosing $\tilde z_1 = p_1 - p_3$ and $\tilde d_1 = \sqrt{2}d$, and details about how to proceed with the calculations can be found in \cite{de2017controlling}. We show in Figure \ref{fig: sq} a simulation of the system (\ref{eq: psq}) with $d=10$.


\begin{figure}
\centering
\includegraphics[width=0.5\columnwidth]{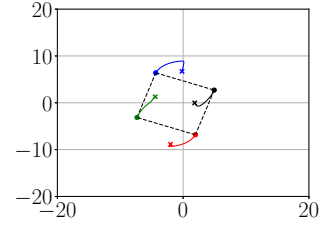}
	\caption{A simulation of the system (\ref{eq: psq}). The \emph{disturbed} flexible formation converges to a square. Only four edges are needed, while for the standard rigid distance-based control at least five would be needed. The crosses denote the initial positions.}
\label{fig: sq}
\end{figure}

\section{Conclusions}
\label{sec: con}
This paper has studied the stability of the distance-based undirected flexible formations. Unlike the stabilization control of infinitesimally rigid formations, the perturbed desired equilibrium of a flexible formation can be turned unstable regardless of the magnitude of the disturbance in the range sensors of the agents. In order to check such a stability we have constructed an augmented error system. This system is created by adding virtual edges to the flexible formation until it becomes virtually minimally rigid. We have provided examples of how to exploit these disturbances as design parameters. In particular, it is possible to control specific shapes with fewer edges than in rigid formations, and to control the motion of the agents in the flexible edges while mixing with rigid formations. The next step to take in our research will focus on finding more systematic methods for exploiting these design parameters in flexible formations. In particular, it will be an interesting research topic to find algorithms that can avoid the necessity of checking the stability of the augmented error system.

\bibliographystyle{IEEEtran}
\bibliography{hector_ref}

\begin{thebibliography}{10}
\providecommand{\url}[1]{#1}
\csname url@samestyle\endcsname
\providecommand{\newblock}{\relax}
\providecommand{\bibinfo}[2]{#2}
\providecommand{\BIBentrySTDinterwordspacing}{\spaceskip=0pt\relax}
\providecommand{\BIBentryALTinterwordstretchfactor}{4}
\providecommand{\BIBentryALTinterwordspacing}{\spaceskip=\fontdimen2\font plus
\BIBentryALTinterwordstretchfactor\fontdimen3\font minus
  \fontdimen4\font\relax}
\providecommand{\BIBforeignlanguage}[2]{{%
\expandafter\ifx\csname l@#1\endcsname\relax
\typeout{** WARNING: IEEEtran.bst: No hyphenation pattern has been}%
\typeout{** loaded for the language `#1'. Using the pattern for}%
\typeout{** the default language instead.}%
\else
\language=\csname l@#1\endcsname
\fi
#2}}
\providecommand{\BIBdecl}{\relax}
\BIBdecl

\bibitem{oh2015survey}
K.-K. Oh, M.-C. Park, and H.-S. Ahn, ``A survey of multi-agent formation
  control,'' \emph{Automatica}, vol.~53, pp. 424--440, 2015.

\bibitem{KrBrFr08}
L.~Krick, M.~E. Broucke, and B.~A. Francis, ``Stabilization of infinitesimally
  rigid formations of multi-robot networks,'' \emph{International Journal of
  Control}, vol.~82, pp. 423--439, 2009.

\bibitem{zhao2014bearing}
S.~Zhao and D.~Zelazo, ``Bearing rigidity and almost global bearing-only
  formation stabilization,'' \emph{Automatic Control, IEEE Transactions on, to
  appear.}, 2016.

\bibitem{dimarogonas2008stability}
D.~V. Dimarogonas and K.~H. Johansson, ``On the stability of distance-based
  formation control,'' in \emph{Decision and Control, 2008. CDC 2008. 47th IEEE
  Conference on}.\hskip 1em plus 0.5em minus 0.4em\relax IEEE, 2008, pp.
  1200--1205.

\bibitem{nansai2013dynamic}
S.~Nansai, M.~R. Elara, and M.~Iwase, ``Dynamic analysis and modeling of jansen
  mechanism,'' \emph{Procedia Engineering}, vol.~64, pp. 1562--1571, 2013.

\bibitem{MouMorseBelSunAnd15}
S.~Mou, M.~A. Belabbas, A.~S. Morse, Z.~Sun, and B.~D.~O. Anderson,
  ``Undirected rigid formations are problematic,'' \emph{IEEE Transactions on
  Automatic Control}, vol.~61, no.~10, pp. 2821--2836, Oct 2016.

\bibitem{izmestiev2009infinitesimal}
I.~Izmestiev, ``Infinitesimal rigidity of frameworks and surfaces,''
  \emph{Lectures on Infinitesimal Rigidity, Kyushu University, Japan}, 2009.

\bibitem{de2018taming}
H.~G. de~Marina, B.~Jayawardhana, and M.~Cao, ``Taming mismatches in
  inter-agent distances for the formation-motion control of second-order
  agents,'' \emph{IEEE Transactions on Automatic Control}, vol.~63, no.~2, pp.
  449--462, 2018.

\bibitem{MarCaoJa15}
H.~G. de~Marina, M.~Cao, and B.~Jayawardhana, ``Controlling rigid formations of
  mobile agents under inconsistent measurements,'' \emph{Robotics, IEEE
  Transactions on}, vol.~31, no.~1, pp. 31--39, Feb 2015.

\bibitem{de2017controlling}
H.~G. de~Marina, Z.~Sun, M.~Cao, and B.~D.~O. Anderson, ``Controlling a
  triangular flexible formation of autonomous agents.''
  \emph{IFAC-PapersOnLine}, vol.~50, no.~1, pp. 594--600, 2017.

\bibitem{de2017distributed}
H.~G. de~Marina, B.~Jayawardhana, and M.~Cao, ``Distributed algorithm for
  controlling scale-free polygonal formations,'' \emph{IFAC-PapersOnLine},
  vol.~50, no.~1, pp. 1760--1765, 2017.

\bibitem{MaJaCa15}
------, ``Distributed rotational and translational maneuvering of rigid
  formations and their applications,'' \emph{IEEE Transactions on Robotics},
  vol.~32, no.~3, pp. 684--697, June 2016.

\end{thebibliography}

\end{document}